\newtheorem{proposition}{Proposition}
\newtheorem*{proposition*}{Proposition}
\newtheorem{theorem}{Theorem}
\begin{document}

\title{A Three--Dimensional Efficient Surface for Portfolio Optimization}
\author{Yimeng Qiu}
\date{\today}
\maketitle

\tableofcontents
\bigskip

\section{Introduction}
\label{sec:intro}

The classical mean–variance portfolio framework pioneered by
Markowitz \citep{Markowitz1952}
quantifies the trade-off between expected return and variance of
portfolio returns, producing the familiar two-dimensional efficient
frontier in the $(\mathbb{E}[r],\sigma)$ plane.
Although elegant and widely adopted, this approach implicitly assumes that
the relevant risk \emph{all} can be summarized by variance and the
covariance matrix.
However, in highly interconnected financial markets, shocks often
propagate through complex networks of exposures, creating systemic
vulnerabilities that variance alone does not capture.

A natural extension is to measure \emph{connectedness risk}—the degree
to which a shock to one asset spills over to others.
Diebold and Yilmaz \citep{DieboldYilmaz2014} formalize this concept using generalized forecast error variance decompositions (FEVD) from a vector
autoregression (VAR), producing a spillover matrix that quantifies the
risk transmission network of the system.
Other studies, such as Billio et al. \citep{Billio2012}, confirm that
network-based measures convey information on systemic risk beyond what
is contained in pairwise covariances.

Recent work has begun to exploit the network structure for portfolio
construction.  
Pozzi et al. \citep{Pozzi2013} show that tilting toward peripheral
stocks in a correlation network improves risk-adjusted performance, and
Peralta and Zareei \citep{PeraltaZareei2016} provide a theoretical link
between asset centrality and optimal mean–variance weights.
Although these articles highlight the importance of network effects, they
treat connectedness implicitly or focus on a single objective, such as
minimum-connectedness portfolios
(Broadstock et al. \citep{Broadstock2022}).

\paragraph{This paper.}
We develop a unified framework that integrates \emph{expected return},
\emph{variance risk}, and \emph{connectedness risk} into a single
three-dimensional optimization paradigm.
Let $\kappa \equiv \mathbf w^{\mathsf T} C \mathbf w$ denote the portfolio
connectedness risk, where $C$ is a symmetric spillover matrix.
For a given return target $\mu_{0}$ and a weighting parameter
$\lambda\in[0,1]$, we solve the following problem.
\[
  \min_{\mathbf w}\;
      (1-\lambda)\,\mathbf w^{\mathsf T}\Sigma\mathbf w
      + \lambda\,\mathbf w^{\mathsf T}C\mathbf w,
  \quad
  \text{s.t.}\;
  \mathbf w^{\mathsf T}\boldsymbol\mu \ge \mu_{0},\;
  \mathbf 1^{\mathsf T}\mathbf w = 1,\;
  \mathbf w \ge 0.
\]
Varying $(\mu_{0},\lambda)$ trace a \emph{three-dimensional efficient
surface} in the $(\mathbb{E}[r],\sigma,\kappa)$ space, allowing investors to
visualize and select portfolios according to their tolerance for both
volatility and network contagion.

\paragraph{Contributions.}
Our study makes four contributions:

\begin{enumerate}[label=\textbf{(\arabic*)}]
  \item \textbf{Three-dimensional efficient surface.}  
        We extend the Markowitz frontier to a 3-D surface that jointly optimizes expected return, variance, and connectedness risk.
  \item \textbf{Analytical characterization.}  
        Under a common-diagonalization assumption we obtain closed-form
        optimal weights, show that variance and connectedness trade off monotonically, and derive a linear approximation of the surface.
  \item \textbf{Connectedness beta.}  
        We introduce a \emph{connectedness $\beta$} that measures an
        asset’s marginal contribution to portfolio connectedness,
        paralleling CAPM beta.
  \item \textbf{Empirical validation.}  
        Using S\&P 500 stocks (2010–2024) we construct dynamic 3D surfaces and show that portfolios with explicit connectedness constraints exhibit superior downside protection during stress episodes relative to mean variance benchmarks.
\end{enumerate}

\paragraph{Road-map.}
Section~\ref{sec:lit} reviews the related literature.
Section~\ref{sec:model} presents the model and the main theoretical
results.
\section{Literature Review}
\label{sec:lit}

Our study is based on four strands of research:
(i) classical and extended mean variance theory,
(ii) econometric measures of connectedness, (iii) portfolio selection based on networks, and (iv) multiobjective risk optimization.

\subsection{Mean–Variance Theory and Its Extensions}

Markowitz’s \emph{portfolio selection} paradigm
\citep{Markowitz1952}
and its continuous–time extension by
\citet{Merton1972}
formalize diversification through the risk of the second moment.
Later refinements incorporate Bayesian priors
(Black and Litterman \citeyearpar{BlackLitterman1992}),
shrinkage estimators
(\citealp{LedoitWolf2004}),
and alternative risk measures such as CVaR
\citep{RockafellarUryasev2000}.
Recent work explores multi-factor or multi-risk frontiers, eg, g, adding tail risk or skewness
\citep{DeguestMartellini2015},
but these remain \emph{variance‐centric} and do not capture
shock propagation in financial networks.

\subsection{Econometric Measures of Connectedness}

The network perspective on systemic risk originates with
\citet{Mantegna1999},
who visualize equity markets through a correlation-based
minimum spanning tree.
Diebold and Yilmaz
(\citeyear{DieboldYilmaz2009}, \citeyear{DieboldYilmaz2012},
\citeyear{DieboldYilmaz2014})
pioneer the VAR–FEVD spillover index,
quantifying how much of the forecast variance of asset~$i$ is
explained by shocks to asset~$j$.
Extensions include frequency domain connectivity \citep{BarunikKrehlik2018} and systemic risk networks in the finance-insurance nexus
\citep{Billio2012}.
These studies consistently find that connectedness increases
during crises, underscoring its importance beyond variance.

\subsection{Network–Based Portfolio Selection}

A growing body of literature exploits network metrics for
asset allocation.
\citet{Pozzi2013} show empirically that tilting towards
\emph{peripheral} equities in a correlation network yields
superior Sharpe ratios.
\citet{PeraltaZareei2016} provide a theoretical link between
eigenvector centrality and optimal mean variance weights,
while \citet{Tumminello2010} advocate filtering noisy
correlation matrices through planarity-constrained graphs
for risk reduction.
More recently, \citet{Broadstock2022} introduced
\emph{Minimum Connectedness Portfolio} (MCoP),
minimizing $w^{\top} C w$ alone.
These contributions confirm the economic value of network
information but remain single–objective or heuristic,
leaving unanswered how to
optimize expected return, variance,
and connectedness \emph{jointly}.

\subsection{Multi–Objective Risk Optimization}

Beyond variance, multi–objective frameworks account for
CVaR, drawdown, or higher moments
\citep{RockafellarUryasev2000,DeguestMartellini2015},
typically through scalarisation or $\varepsilon$-contraint methods.
Our work differs in that the risk of connectedness is
\emph{economically orthogonal} to variance and correlation,
arising from the dynamic spillover topology, not from
contemporaneous co-movement.
We therefore extend the efficient frontier to a three-dimensional surface
$(\mathbb E[r],\sigma,\kappa)$, in which variance and
connectedness appear as coequal quadratic forms.
To our knowledge, this is the first attempt to provide
closed‐form characterization, marginal beta interpretation,
and empirical visualization of such a surface.
f
\section{Model and Optimization Framework}
\label{sec:model}

\subsection{Setting and Notation}
Let $N\!$ risky assets have random gross returns
$\mathbf r=(r_1,\dots,r_N)^{\mathsf T}$.  Denote by \[ \boldsymbol\mu=\mathbb E[\mathbf r], \qquad \Sigma=\operatorname{Cov}(\mathbf r) \] the expected return vector and the covariance matrix $N\times N$.
A portfolio is a weight vector
$w=(w_1,\dots,w_N)^{\mathsf T}$ satisfying  
\begin{equation}\label{eq:budget}
\mathbf 1^{\mathsf T}w = 1, \qquad w_i\ge0 \;\;(i=1,\dots,N).
\end{equation}
The expected return and variance of the portfolio are
\(
\mathbb E[r_p]=w^{\mathsf T}\boldsymbol\mu,\;
\sigma_p^{2}=w^{\mathsf T}\Sigma w.
\)

\vspace{4pt}
\noindent\textbf{Connectedness matrix.}
To capture the risk of network spillover, we introduce a symmetric,
positive semidefinite matrix $C$ (e.g., the Diebold-Yilmaz FEVD
matrix; \citealp{DieboldYilmaz2014,Billio2012}).  
Define the portfolio’s \emph{connectedness risk}
\begin{equation}\label{eq:kappa}
  \kappa_p \;=\; w^{\mathsf T} C\,w .
\end{equation}

\vspace{4pt}
\subsection{Joint Risk Objective}
For a trade-off parameter $\lambda\in[0,1]$ we minimize
\begin{equation}\label{eq:objective}
  L_\lambda(w)\;=\;
  \lambda\,w^{\mathsf T}\Sigma w
  \;+\;
  (1-\lambda)\,w^{\mathsf T}C w,
\end{equation}
subject to \eqref{eq:budget}.
Setting $\lambda=1$ recovers the global minimum-variance portfolio,
while $\lambda=0$ produces the minimum-connectedness portfolio
\citep{Broadstock2022Minimum}.
Varying $\lambda$ traces a risk–risk frontier
$\bigl(\sigma_p(\lambda),\kappa_p(\lambda)\bigr)$;  
adding a return target sweeps the entire
three-dimensional efficient surface
$\bigl(\mathbb E[r_p],\sigma_p,\kappa_p\bigr)$.

\subsection{Main Propositions}

\begin{proposition}[Existence and (Conditional) Uniqueness]
\label{prop:exist}
Fix a trade‑off parameter $\lambda\in[0,1]$ and consider the quadratic
program
\[
  \min_{w\in\mathbb R^{N}}
     L_\lambda(w)=
     \lambda\,w^{\mathsf T}\Sigma w
     +(1-\lambda)\,w^{\mathsf T}C w
  \quad
  \text{s.t.}\;
     \mathbf1^{\mathsf T}w=1,\;\;w_i\ge0 .
\]
Assume $\Sigma\succ0$ and $C\succeq0$.  Then

\begin{enumerate}[label=(\roman*)]
\item \textbf{Existence (all $\lambda$).}\;
      A minimizer $w^{\ast}(\lambda)$ exists for every
      $\lambda\in[0,1]$.

\item \textbf{Uniqueness (strictly positive‑definite case).}\;
      If either $\lambda>0$ \textup{or} $C\succ0$
      —so that
      \(M_\lambda := \lambda\Sigma+(1-\lambda)C \succ 0\)—
      the minimizer is unique.

\item \textbf{Continuity of the solution map.}\;
      On any closed subinterval of $[0,1]$ where the minimizer is unique, the mapping \(\lambda \mapsto w^{\ast}(\lambda)\) is continuous.
      In particular, it is continuous on $(0,1]$ and on all of $[0,1]$
      whenever $C\succ0$.
\end{enumerate}
\end{proposition}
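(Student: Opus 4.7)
The plan is to handle the three parts sequentially, exploiting the common fact that the feasible set $\Delta=\{w\in\mathbb R^{N}:\mathbf 1^{\mathsf T}w=1,\,w\ge 0\}$ is the standard simplex, hence nonempty, convex, and compact, and that for each $\lambda$ the objective $L_\lambda$ is a continuous quadratic form in $w$ with Hessian $2M_\lambda$.

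For part (i), existence is immediate from the Weierstrass extreme value theorem applied to the continuous function $L_\lambda$ on the compact set $\Delta$. The argument is uniform in $\lambda$, so it covers the full interval $[0,1]$ including the boundary cases $\lambda=0$ and $\lambda=1$.

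For part (ii), I would reduce uniqueness to strict convexity on $\Delta$. A strictly convex function on a convex set has at most one minimizer, so it suffices to check that $M_\lambda\succ 0$ in each stated regime. If $\lambda>0$, then $\lambda\Sigma\succ 0$ since $\Sigma\succ 0$, and $(1-\lambda)C\succeq 0$, so $M_\lambda\succ 0$. If $C\succ 0$, then $M_\lambda\succeq(1-\lambda)C\succ 0$ whenever $\lambda<1$, and $M_1=\Sigma\succ 0$; in both subcases strict convexity of $L_\lambda$ follows and delivers uniqueness.

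For part (iii), I would invoke Berge's maximum theorem: the constraint correspondence is constant (equal to $\Delta$), hence trivially continuous, and $L_\lambda(w)$ is jointly continuous in $(\lambda,w)$, so the argmin correspondence is upper hemi-continuous on $[0,1]$; on any subinterval where it is single-valued, upper hemi-continuity reduces to ordinary continuity. For a self-contained proof I would use a subsequence argument: fix $\lambda_\star$ in such a subinterval and let $\lambda_n\to\lambda_\star$. By compactness of $\Delta$ the sequence $\{w^{\ast}(\lambda_n)\}$ has a convergent subsequence with limit $\bar w\in\Delta$; passing to the limit in the inequality $L_{\lambda_n}(w^{\ast}(\lambda_n))\le L_{\lambda_n}(w)$ for every $w\in\Delta$ and using joint continuity of $L$ gives $L_{\lambda_\star}(\bar w)\le L_{\lambda_\star}(w)$, so $\bar w$ minimizes $L_{\lambda_\star}$ and uniqueness forces $\bar w=w^{\ast}(\lambda_\star)$; since the limit is independent of the subsequence chosen, the whole sequence converges.

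The main obstacle is the careful bookkeeping in (iii): one must note that uniformly on any closed subinterval of $(0,1]$ (and on all of $[0,1]$ when $C\succ 0$) the minimizer is single-valued by (ii), which is precisely the hypothesis required for the subsequence argument to conclude convergence of $w^{\ast}(\lambda_n)$ rather than merely of a subsequence. Everything else—compactness of $\Delta$, continuity of $L_\lambda$, and positive definiteness of $M_\lambda$—is routine linear algebra.
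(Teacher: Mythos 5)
Your proof is correct, and parts (i) and (ii) coincide with the paper's argument (Weierstrass on the compact simplex; strict convexity of $w\mapsto w^{\mathsf T}M_\lambda w$ when $M_\lambda\succ0$). Part (iii), however, takes a genuinely different---and in fact sounder---route. The paper establishes continuity by writing $w^{\ast}(\lambda)=M_\lambda^{-1}\mathbf 1/\bigl(\mathbf 1^{\mathsf T}M_\lambda^{-1}\mathbf 1\bigr)$ and appealing to continuity of matrix inversion; but that closed form is the solution of the budget-only problem and need not equal the minimizer of the stated program once the constraints $w_i\ge0$ bind, so the paper's Step~4 only covers the case where the unconstrained optimum happens to be nonnegative. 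Your argument via Berge's maximum theorem (or the self-contained subsequence version: extract a convergent subsequence of $w^{\ast}(\lambda_n)$ in the compact simplex, pass to the limit in $L_{\lambda_n}(w^{\ast}(\lambda_n))\le L_{\lambda_n}(w)$ using joint continuity, and invoke uniqueness at $\lambda_\star$ to identify the limit and upgrade subsequential to full convergence) makes no use of a closed form and therefore handles the long-only constraints correctly. What you give up is the explicit solution map; what you gain is an argument that actually applies to the problem as stated and generalizes to any constant compact feasible set. Your closing remark correctly identifies the one place where care is needed: single-valuedness on the subinterval is exactly what converts upper hemi-continuity into continuity of the selection.
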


\begin{proof}
\textit{Step 1 (compact feasible set).}  
The constraints
\(
   \mathbf1^{\mathsf T}w=1,\;w_i\ge0
\)
confine $w$ to the closed simplex
\[
   \mathcal W \;=\;
   \bigl\{w\in\mathbb R^{N}:\, \mathbf1^{\mathsf T}w=1,\; w\ge0\bigr\},
\]
which is closed and bounded, and hence compact.

\smallskip
\textit{Step 2 (existence).}  
For fixed $\lambda$ the loss
\(L_\lambda(w)\) is continuous in $w$.  
By the extreme value theorem, a continuous function attains its infimum
in a compact set, so there is at least one minimizer
\(w^{\ast}(\lambda)\in\mathcal W\) exists—establishing~(i).

\smallskip
\textit{Step 3 (strict convexity and uniqueness).}  
Set
\(M_\lambda:=\lambda\Sigma+(1-\lambda)C\).
If \(\lambda>0\) or \(C\succ0\) then
\(M_\lambda\succ0\); the quadratic form
\(w\mapsto w^{\mathsf T}M_\lambda w\) is \emph{strictly} convex, hence it
admits at most one minimizer in the convex set $\mathcal W$.  
This proves~(ii).  
When \(\lambda=0\) and \(C\) is singular, $M_0=C$ is only
positive‑semidefinite; the objective is merely convex, so multiple
minimizer can arise (e.g.\ $C=0$ makes every $w\in\mathcal W$
optimal).  We therefore refrain from a blanket uniqueness claim in that
degenerate case.

\smallskip
\textit{Step 4 (continuity of the minimizer).}  
Restrict attention to any closed interval on which
\(M_\lambda\succ0\) so that the minimizer is unique.
The map
\[
  \lambda \;\longmapsto\; M_\lambda
\]
is continuous in the operator norm.  
Because matrix inversion is continuous on the cone of positive definite matrices, the closed-form expression
\(
   w^{\ast}(\lambda)=
   M_\lambda^{-1}\mathbf1 \big/
   \bigl[\mathbf1^{\mathsf T}M_\lambda^{-1}\mathbf1\bigr]
\)
is continuous in~$\lambda$.  
Hence, the single-valued selection
\(\lambda\mapsto w^{\ast}(\lambda)\) is continuous wherever
\(M_\lambda\succ0\), proving~(iii). \qedhere
\end{proof}
\begin{proposition}[Closed‑Form Optimal Weights
                   \textnormal{(short‑selling allowed; see Appendix~A for the long‑only case)}]
\label{prop:closed}
Assume that the two risk matrices are simultaneously diagonalizable, i.e.
there exists an orthogonal matrix $U$ such that
\[
   \Sigma \;=\; U\Lambda_\Sigma U^{\mathsf T},
   \qquad
   C       \;=\; U\Lambda_C      U^{\mathsf T},
\]
with diagonal entries $\sigma_i^{2}>0$ and $c_i>0$.  
Fix a trade‑off parameter $\lambda\in[0,1]$ and define
\[
   M_\lambda \;:=\; \lambda\Sigma \;+\; (1-\lambda)C .
\]

Under the \emph{sole} budget constraint  
\begin{equation}\label{eq:budget-unbounded}
   \mathbf1^{\mathsf T}w \;=\; 1,
\end{equation}
and allowing individual asset positions to take any real value
(short‑selling permitted), the quadratic programme  
\[
   \min_{w\in\mathbb R^{N}} 
         L_\lambda(w)=w^{\mathsf T}M_\lambda w
\]
admits the unique optimum
\begin{equation}\label{eq:closed-form-weights}
   w^{\ast}(\lambda)
   \;=\;
   \frac{M_\lambda^{-1}\mathbf1}
        {\mathbf1^{\mathsf T}M_\lambda^{-1}\mathbf1}.
\end{equation}
\end{proposition}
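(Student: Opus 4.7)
The plan is to solve this equality-constrained, strictly convex quadratic program by the standard Lagrange multiplier method. Because the only active constraint is the linear equality $\mathbf{1}^{\mathsf T}w=1$, the KKT conditions collapse to one stationarity equation plus the budget itself, and the simultaneous-diagonalizability hypothesis will be used mainly to make strict positive definiteness of $M_\lambda$ transparent.

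First I would exploit the common diagonalization to write
\[
   M_\lambda \;=\; U\bigl(\lambda\Lambda_\Sigma+(1-\lambda)\Lambda_C\bigr)U^{\mathsf T},
\]
so that the eigenvalues of $M_\lambda$ are exactly $\lambda\sigma_i^{2}+(1-\lambda)c_i$. Since $\sigma_i^{2}>0$, $c_i>0$, and $\lambda\in[0,1]$, every eigenvalue is strictly positive; hence $M_\lambda\succ 0$, the inverse $M_\lambda^{-1}=U\bigl(\lambda\Lambda_\Sigma+(1-\lambda)\Lambda_C\bigr)^{-1}U^{\mathsf T}$ exists, and the objective $w\mapsto w^{\mathsf T}M_\lambda w$ is strictly convex. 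Uniqueness of the minimizer then follows (it is also a special case of Proposition~\ref{prop:exist}(ii)).

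Next I would form the Lagrangian
\[
   \mathcal L(w,\nu)\;=\; w^{\mathsf T}M_\lambda w \;-\; \nu\bigl(\mathbf1^{\mathsf T}w-1\bigr),
\]
and set $\nabla_w \mathcal L=0$, giving $2M_\lambda w=\nu\mathbf 1$, i.e.\ $w=(\nu/2)\,M_\lambda^{-1}\mathbf 1$. Substituting into $\mathbf 1^{\mathsf T}w=1$ determines $\nu/2 = 1/(\mathbf 1^{\mathsf T}M_\lambda^{-1}\mathbf 1)$, whence
\[
   w^{\ast}(\lambda)\;=\;\frac{M_\lambda^{-1}\mathbf 1}{\mathbf 1^{\mathsf T}M_\lambda^{-1}\mathbf 1}.
\]
The denominator is strictly positive because $M_\lambda^{-1}\succ 0$ and $\mathbf 1\ne\mathbf 0$, so the expression is well-defined, and strict convexity with a single affine equality constraint guarantees that this KKT point is the unique global optimum.

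Candidly, no step here is a genuine obstacle — this is the textbook derivation of the global minimum-variance portfolio, applied to $M_\lambda$ in place of $\Sigma$. The one nuance worth flagging is that the simultaneous-diagonalizability hypothesis is \emph{stronger than necessary} for this proposition alone: any positive-definite convex combination $M_\lambda$ would yield the same closed form. The common-basis assumption is presumably reserved for the downstream monotone-tradeoff and linear-approximation results, and here it merely supplies the cleanest certificate that each eigenvalue of $M_\lambda$ is strictly positive.
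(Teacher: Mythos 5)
Your proposal is correct and follows essentially the same route as the paper: form the Lagrangian, solve the first-order condition $2M_\lambda w=\nu\mathbf 1$, enforce the budget to pin down $\nu$, and invoke strict convexity for uniqueness. Your use of the common eigenbasis to verify $M_\lambda\succ0$ (and your remark that the diagonalizability hypothesis is stronger than needed here) is a small but welcome tightening of the paper's Step~4, which justifies positive definiteness only from $\Sigma\succ0$, $C\succeq0$ — an argument that would fail at $\lambda=0$ without the assumption $c_i>0$.
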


\begin{proof}
\textit{Step 1 (Lagrangian).}  
Introduce a multiplier $\nu$ for~\eqref{eq:budget-unbounded} and set  
\[
   \mathcal L(w,\nu)
   \;=\;
   w^{\mathsf T}M_\lambda w
   \;-\;
   \nu\bigl(\mathbf1^{\mathsf T}w-1\bigr).
\]

\smallskip
\textit{Step 2 (first‑order condition).}  
Differentiating with respect to $w$ gives  
\(
   2M_\lambda w-\nu\mathbf1=0
   \;\Longrightarrow\;
   w=\tfrac{\nu}{2}\,M_\lambda^{-1}\mathbf1 .
\)

\smallskip
\textit{Step 3 (enforce the budget).}  
Imposing $\mathbf1^{\mathsf T}w=1$ yields  
\[
   \frac{\nu}{2}
   \;=\;
   \bigl[\mathbf1^{\mathsf T}M_\lambda^{-1}\mathbf1\bigr]^{-1}.
\]
Substituting back gives the closed form
\eqref{eq:closed-form-weights}.

\smallskip
\textit{Step 4 (optimality and uniqueness).}  
Because $\Sigma\succ0$ and $C\succeq0$, we have $M_\lambda\succ0$
for all $\lambda\in[0,1]$.  
The objective $w^{\mathsf T}M_\lambda w$ is therefore
\emph{strictly} convex, so the stationary point found above is the
\emph{unique} global minimizer subject to~\eqref{eq:budget-unbounded}.
\qedhere
\end{proof}

\paragraph{Remark (long‑only portfolios).}
If one additionally imposes $w_i\ge0$, the vector
\eqref{eq:closed-form-weights} is still optimal \emph{iff} every
component is nonnegative.  Otherwise, the problem becomes a bounded
quadratic program whose solution no longer admits a one‑line closed
form; it must be obtained via KKT complementarity or numerical solvers
(e.g. activeset or interiorpoint methods).  Appendix~B contains a
full KKT derivation together with an efficient active set algorithm and
a numerical illustration.
\begin{proposition}[Strict Trade–off: Negative Slope]
\label{prop:neg-slope}
Let
\(
   \sigma^{2}(\lambda)=w^{\ast}(\lambda)^{\mathsf T}\Sigma w^{\ast}(\lambda)
\)
and
\(
   \kappa(\lambda)=w^{\ast}(\lambda)^{\mathsf T}C w^{\ast}(\lambda),
\)
where $w^{\ast}(\lambda)$ is the unique minimizer of
\eqref{eq:objective} for a given $\lambda\in(0,1)$.
If $\Sigma$ and $C$ are not proportional, then
\[
  \sigma^{2\,\prime}(\lambda)<0,
  \qquad
  \kappa^{\prime}(\lambda)>0,
  \qquad
  \frac{\mathrm d\sigma^{2}}{\mathrm d\kappa}
    =-\frac{1-\lambda}{\lambda}\;<\;0.
\]
Therefore, the efficient frontier in the $(\sigma,\kappa)$ plane is strictly
downward-sloping: One cannot decrease the risk of connectedness without
increasing variance.
\end{proposition}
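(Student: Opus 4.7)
The strategy is to (a) pin down the slope $d\sigma^{2}/d\kappa$ via an envelope identity on the value function, (b) extract the individual signs of $(\sigma^{2})'$ and $\kappa'$ by a comparative-statics argument exploiting strict convexity of $L_{\lambda}$, and (c) discharge the non-proportionality hypothesis to promote weak monotonicity to strict monotonicity.

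For step (a) I would set $V(\lambda):=L_{\lambda}(w^{\ast}(\lambda))=\lambda\,\sigma^{2}(\lambda)+(1-\lambda)\,\kappa(\lambda)$, which is differentiable on $(0,1)$ by Proposition~\ref{prop:exist}(iii). The envelope theorem, applicable because the feasible set is independent of $\lambda$, gives $V'(\lambda)=\sigma^{2}(\lambda)-\kappa(\lambda)$. Differentiating the explicit form of $V$ directly also yields $V'(\lambda)=(\sigma^{2}-\kappa)+\lambda(\sigma^{2})'(\lambda)+(1-\lambda)\kappa'(\lambda)$. Equating the two expressions cancels the $(\sigma^{2}-\kappa)$ term and leaves the key identity $\lambda(\sigma^{2})'(\lambda)+(1-\lambda)\kappa'(\lambda)=0$, which delivers $d\sigma^{2}/d\kappa=-(1-\lambda)/\lambda$ as soon as one of the two derivatives is known nonzero.

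For step (b) I would pass from infinitesimal calculus to a pair of comparative-statics inequalities. Fix $\lambda_{1}<\lambda_{2}$ in $(0,1)$ with unique minimizers $w_{i}^{\ast}$ (Proposition~\ref{prop:exist}(ii)). Provided $w_{1}^{\ast}\ne w_{2}^{\ast}$, strict convexity gives $L_{\lambda_{i}}(w_{i}^{\ast})<L_{\lambda_{i}}(w_{j}^{\ast})$ for $i\ne j$. Writing these out in terms of $\alpha:=\sigma^{2}(\lambda_{1})-\sigma^{2}(\lambda_{2})$ and $\beta:=\kappa(\lambda_{1})-\kappa(\lambda_{2})$ produces $\lambda_{1}\alpha+(1-\lambda_{1})\beta<0$ together with $\lambda_{2}\alpha+(1-\lambda_{2})\beta>0$. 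Subtracting the first from the second forces $\alpha>\beta$; assuming $\beta\ge0$ then contradicts the first inequality, so $\beta<0$, and plugging this back into the second forces $\alpha>0$. Hence $\sigma^{2}$ is strictly decreasing and $\kappa$ strictly increasing in $\lambda$, which combined with the slope identity of step (a) yields $(\sigma^{2})'(\lambda)<0$, $\kappa'(\lambda)>0$, and the claimed ratio.

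The main obstacle is step (c): bridging the stated hypothesis ``$\Sigma$ and $C$ are not proportional'' to the non-triviality $w^{\ast}(\lambda_{1})\ne w^{\ast}(\lambda_{2})$. If $w^{\ast}$ were constant across two distinct $\lambda$, the stationarity condition $M_{\lambda}w^{\ast}\in\mathrm{span}(\mathbf1)$ at both values would force both $\Sigma w^{\ast}\parallel\mathbf1$ and $Cw^{\ast}\parallel\mathbf1$, i.e.\ the global minimum-variance and minimum-connectedness portfolios coincide ($\Sigma^{-1}\mathbf1\parallel C^{-1}\mathbf1$). Proportionality of $\Sigma$ and $C$ certainly implies this coincidence, but the converse can fail (e.g.\ $\Sigma=I$ together with any $C\succ0$ satisfying $C\mathbf1\propto\mathbf1$ but $C\not\propto I$ makes the entire $\lambda$-curve collapse to a point). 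The cleanest repair is to read the proposition's hypothesis as the slightly stronger, generic condition $\Sigma^{-1}\mathbf1\not\parallel C^{-1}\mathbf1$ (equivalently, the MVP and MCP portfolios differ); under this non-degeneracy the comparative-statics step applies to every pair $\lambda_{1}<\lambda_{2}$ and the argument closes.
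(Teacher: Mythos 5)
Your step (a) is exactly the paper's argument: the envelope identity $F'(\lambda)=\sigma^{2}-\kappa$ equated with the direct derivative of $F(\lambda)=\lambda\sigma^{2}+(1-\lambda)\kappa$ to obtain $\lambda\sigma^{2\,\prime}+(1-\lambda)\kappa^{\prime}=0$ (your version even fixes a sign typo in the paper's display (B)). Your step (b) takes a genuinely different route to the signs: the paper invokes concavity of the value function ($F''=\sigma^{2\,\prime}-\kappa^{\prime}\le0$) and combines it with the linear identity, whereas you use the discrete exchange inequalities $L_{\lambda_i}(w_i^{\ast})<L_{\lambda_i}(w_j^{\ast})$ to get strict monotonicity of $\sigma^{2}$ and $\kappa$ directly, without any second-order or differentiability input. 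Your inequalities check out ($\alpha>\beta$, then $\beta<0$, then $\alpha>0$), and the comparative-statics argument is arguably cleaner because it makes explicit exactly what drives strictness, namely $w^{\ast}(\lambda_1)\neq w^{\ast}(\lambda_2)$.

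Your step (c) identifies a genuine flaw in the proposition as stated, which the paper's proof papers over: Step 5 of the paper asserts that both derivatives vanishing ``would'' force $\Sigma\propto C$, but your counterexample ($\Sigma=I$, $C\succ0$ with $C\mathbf 1\propto\mathbf 1$ and $C\not\propto I$) makes $w^{\ast}(\lambda)\equiv\mathbf 1/N$, so $\sigma^{2}$ and $\kappa$ are constant while $\Sigma\not\propto C$; the correct hypothesis is $\Sigma^{-1}\mathbf 1\not\parallel C^{-1}\mathbf 1$ (distinct MV and MC portfolios), and your stationarity argument showing that a $\lambda$-invariant optimum forces $\Sigma w^{\ast}\parallel\mathbf 1$ and $Cw^{\ast}\parallel\mathbf 1$ is correct. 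One residual gap remains in your write-up (shared by the paper): strict monotonicity of $\kappa(\cdot)$ yields only $\kappa^{\prime}(\lambda)\ge0$ pointwise, not $\kappa^{\prime}(\lambda)>0$, since a strictly increasing smooth function can have vanishing derivative at isolated points. To close this you need one more observation, e.g.\ that under the closed form $w^{\ast}(\lambda)=M_\lambda^{-1}\mathbf 1/(\mathbf 1^{\mathsf T}M_\lambda^{-1}\mathbf 1)$ the functions are rational in $\lambda$ and $\kappa^{\prime}(\lambda_0)=0$ together with the identity would force $F''(\lambda_0)=0$, which can then be excluded by the explicit eigen-computation of Appendix~B (or by a direct second-order expansion); as written, neither your argument nor the paper's delivers strict pointwise signs. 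Also note that your analysis implicitly works with the budget-only problem; under the long-only constraints of \eqref{eq:budget} the stationarity condition acquires KKT multipliers and the differentiability of $\lambda\mapsto w^{\ast}(\lambda)$ at points where the active set changes is itself not guaranteed.
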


\begin{proof}[Proof (concise convex–analytic argument)]
For each $\lambda$ define
\(M_\lambda=\lambda\Sigma+(1-\lambda)C\)
and let
\(F(\lambda)=\min_{w\in\mathcal W}w^{\mathsf T}M_\lambda w\).
Because $L_\lambda(w)=w^{\mathsf T}M_\lambda w$ is strictly convex in
$w$ and linear in~$\lambda$, the map $F$ is (i) differentiable and
(ii) \emph{concave} on $(0,1)$.

\medskip
\noindent\textit{Step 1 (Envelope theorem).}
At the optimum
\(w^\ast(\lambda)\) we have
\[
  F'(\lambda)
   =\partial_\lambda
     \bigl[w^{\mathsf T}M_\lambda w\bigr]_{w=w^\ast}
   =w^{\ast\mathsf T}(\Sigma-C)w^\ast
   =\sigma^{2}(\lambda)-\kappa(\lambda).
\tag{A}
\]

\noindent\textit{Step 2 (Total derivative).}
Since
\(F(\lambda)=\lambda\sigma^{2}+ (1-\lambda)\kappa\),
\[
  F'(\lambda)
  =\sigma^{2}-\kappa
   +\lambda\sigma^{2\,\prime}
   -(1-\lambda)\kappa^{\prime}.
\tag{B}
\]

\noindent\textit{Step 3 (Linear relation).}
Equating (A) and (B) gives
\begin{equation}\label{eq:trade-off-inline}
  \lambda\,\sigma^{2\,\prime}
  +(1-\lambda)\,\kappa^{\prime}=0.
\end{equation}

\noindent\textit{Step 4 (Concavity sign).}
Concavity of $F$ implies
\(F''(\lambda)=\sigma^{2\,\prime}-\kappa^{\prime}\le0\).

\noindent\textit{Step 5 (Sign of derivatives).}
The system \eqref{eq:trade-off-inline} and \(F''\le0\) forces
\(\sigma^{2\,\prime}<0\) and \(\kappa^{\prime}>0\)
whenever $\Sigma\not\propto C$ (otherwise both derivatives would be
zero).  Dividing the two derivatives yields
\(
   \mathrm d\sigma^{2}/\mathrm d\kappa =-(1-\lambda)/\lambda<0,
\)
establishing the strictly negative slope.  
\medskip

An explicit eigen-component derivation, applicable when \(\Sigma\) and \(C\) share a common orthonormal eigenbasis, is provided in Appendix~B. This supplement facilitates numerical sensitivity analysis and isolates the contribution of individual spectral factors. \qedhere
\end{proof}
\begin{proposition}[Degenerate Case: $C=c\,\Sigma$]
\label{prop:degenerate}
Let $\Sigma\succ0$ and suppose the connectedness matrix is a positive
scalar multiple of the covariance matrix,
\begin{equation}\label{eq:degenerate-proportional}
   C = c\,\Sigma ,\qquad c>0 .
\end{equation}
Under the sole budget constraint
\(
   \mathbf1^{\mathsf T}w = 1
\)
and allowing short sales, every feasible portfolio satisfies the
linear relation  
\[
   \kappa_p(w) \;=\; c\,\sigma_p^{2}(w),
\]
so all attainable risk pairs lie on the single ray
$\kappa = c\,\sigma^{2}$.

Moreover, the quadratic objective
\[
   L_\lambda(w)
   = \lambda\,w^{\mathsf T}\Sigma w
     + (1-\lambda)\,w^{\mathsf T}C w
\]
reduces to a positive scalar multiple of variance,
\begin{equation}\label{eq:degenerate-objective-scalar}
   L_\lambda(w)
   = \bigl[\lambda + (1-\lambda)c\bigr]\,
     w^{\mathsf T}\Sigma w ,
\end{equation}
so for \emph{every} $\lambda\in[0,1]$ the unique minimizer is the
global minimum‑variance portfolio
\begin{equation}\label{eq:degenerate-wMV}
   w^{\mathrm{MV}}
   \;=\;
   \frac{\Sigma^{-1}\mathbf1}
        {\mathbf1^{\mathsf T}\Sigma^{-1}\mathbf1}.
\end{equation}
Hence, the trade-off parameter $\lambda$ is redundant and the entire
risk–risk frontier collapses to the straight line
$\kappa = c\,\sigma^{2}$ through the origin.
\end{proposition}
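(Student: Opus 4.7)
The plan is to proceed by direct substitution, since the proposition is essentially an algebraic identity combined with an appeal to Proposition~\ref{prop:closed}. First I would substitute $C=c\Sigma$ into the definition $\kappa_p(w)=w^{\mathsf T}Cw$ to obtain $\kappa_p(w)=c\,w^{\mathsf T}\Sigma w=c\,\sigma_p^{2}(w)$, which establishes the linear relation $\kappa=c\sigma^{2}$ for \emph{every} feasible $w$, not only for the optimal ones. This single observation already confines all attainable risk pairs to one ray in the $(\sigma^{2},\kappa)$ plane, regardless of optimization.

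Next I would substitute the same identity into the loss $L_\lambda$ and factor out the common quadratic form:
\[
   L_\lambda(w)=\lambda\,w^{\mathsf T}\Sigma w+(1-\lambda)\,c\,w^{\mathsf T}\Sigma w=\bigl[\lambda+(1-\lambda)c\bigr]\,w^{\mathsf T}\Sigma w,
\]
which is exactly \eqref{eq:degenerate-objective-scalar}. The critical remark here is that the scalar $\lambda+(1-\lambda)c$ is \emph{strictly} positive for every $\lambda\in[0,1]$: it equals $1$ at $\lambda=1$, equals $c>0$ at $\lambda=0$, and is a convex combination in between. Consequently, minimizing $L_\lambda$ subject to $\mathbf1^{\mathsf T}w=1$ is equivalent, for every $\lambda$, to minimizing the pure variance $w^{\mathsf T}\Sigma w$ under the same constraint.

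I would then invoke Proposition~\ref{prop:closed} with $M_\lambda$ replaced by $\Sigma$ (or repeat its one-line Lagrangian argument directly) to conclude that the unique optimum is the classical global minimum-variance portfolio $w^{\mathrm{MV}}=\Sigma^{-1}\mathbf1/(\mathbf1^{\mathsf T}\Sigma^{-1}\mathbf1)$. Uniqueness follows from the strict convexity of $w\mapsto w^{\mathsf T}\Sigma w$, guaranteed by $\Sigma\succ 0$. Since the minimizer does not depend on $\lambda$, the parametric family $\{w^{\ast}(\lambda):\lambda\in[0,1]\}$ collapses to the single point $w^{\mathrm{MV}}$, and together with the first step the entire risk--risk frontier reduces to the ray $\kappa=c\sigma^{2}$ through the origin.

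There is no genuine obstacle: the argument is purely algebraic and does not rely on the common-diagonalization hypothesis used elsewhere. The only subtlety I would flag is the need for strict positivity of the scalar multiplier, which justifies dropping it without reversing the direction of optimization; this in turn requires the hypothesis $c>0$ rather than merely $c\ge 0$, since at $c=0$ and $\lambda=0$ the objective would vanish identically and uniqueness would fail. Otherwise, the proof is a short chain of substitutions plus one invocation of the already-established closed-form formula.
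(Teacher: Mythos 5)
Your proposal is correct and follows essentially the same route as the paper's own proof: direct substitution of $C=c\Sigma$ to get the ray $\kappa=c\sigma^{2}$, factoring the objective into $[\lambda+(1-\lambda)c]\,w^{\mathsf T}\Sigma w$, and reducing to the classical minimum-variance problem whose unique short-sale solution is $w^{\mathrm{MV}}$. Your justification that the scalar multiplier is a convex combination of $1$ and $c>0$ is in fact slightly cleaner than the paper's remark that ``$\lambda,c\in(0,1]$,'' which overlooks $\lambda=0$ and needlessly bounds $c$ above by $1$.
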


\begin{proof}
\textit{Step 1 (linear dependence of the two risks).}
From~\eqref{eq:degenerate-proportional}
\[
   \kappa_p(w)=w^{\mathsf T}Cw = c\,w^{\mathsf T}\Sigma w
             = c\,\sigma_p^{2}(w)
\]
holds for every feasible $w$, proving the frontier degenerates to the
ray $\kappa=c\,\sigma^{2}$.

\smallskip
\textit{Step 2 (objective collapses to scaled variance).}
Substituting~\eqref{eq:degenerate-proportional} into the joint loss
gives~\eqref{eq:degenerate-objective-scalar}.  The scalar factor in
brackets is strictly positive because $\lambda,c\in(0,1]$.

\smallskip
\textit{Step 3 (unique optimizer).}
Minimizing~\eqref{eq:degenerate-objective-scalar} is therefore
equivalent to the classical minimum‑variance problem
\[
   \min_{w}\;w^{\mathsf T}\Sigma w
   \quad\text{s.t. } \mathbf1^{\mathsf T}w=1,
\]
whose unique solution under short selling is exactly
\eqref{eq:degenerate-wMV}.  Since the objective differs only by a
positive constant factor, the same $w^{\mathrm{MV}}$ minimizes
$L_\lambda$ for \emph{all} $\lambda$.

\smallskip
\textit{Step 4 (economic interpretation).}
Because connectedness does not add information beyond variance when
$C\propto\Sigma$, investors do not obtain a diversification benefit from
treating $\kappa_p$ separately; the optimization problem reduces to
the mean-variance analysis and the third dimension of risk is redundant.
\qedhere
\end{proof}

\paragraph{Remark (long‑only portfolios).}
If non‑negativity constraints $w_i\ge0$ are imposed, the relation
$\kappa=c\,\sigma^{2}$ still holds, but the minimum variance weights
can no longer be written in closed form; the problem becomes a bounded
quadratic program solved by active set or interior point methods.
Appendix~A outlines the required KKT conditions and provides an
efficient algorithmic routine.
\subsection{Connectedness $\beta$ and Three–Fund Separation}
\label{subsec:beta-3fund}

\paragraph{Definition.}
For a portfolio with weight vector $w$ and connectedness matrix
$C$, we define \emph{connectedness beta} of the asset $i$ by
\[
   \beta^{(C)}_i
   := 2\,[Cw]_i,
   \qquad i=1,\dots,n.
\]
Because $w^\mathsf T C w=\kappa_p$, the betas satisfy
\[
   \sum_{i=1}^{n} w_i\,\beta^{(C)}_i \;=\; 2\,\kappa_p.
\]
Assets with large $\beta^{(C)}$ behave as systemic \emph{hubs}\footnote{%
In network terminology, a \emph{hub} is a node with unusually high
degree, eigenvector centrality, or forecast‐error variance share; shocks
to such nodes propagate disproportionately through the system
(Kleinberg 2012, \emph{Journal of the ACM}).}.
That is, they occupy highly connected positions in the spillover
network, so tilting a portfolio toward these names increases its overall
risk of connectedness $\kappa_p$.
\vspace{0.5em}

\begin{theorem}[Conditional Three–Fund Separation]
\label{thm:3fund}
Let
\begin{align*}
   w^{\mathrm{MV}}
   :=\arg\min_{\mathbf1^{\mathsf T}w=1}\,w^{\mathsf T}\Sigma w,
   \qquad
   w^{\mathrm{MC}}
   :=\arg\min_{\mathbf1^{\mathsf T}w=1}\,w^{\mathsf T}C w,
\end{align*}
and let
\(w^{\max\mu}\) denote the maximum–return portfolio
\(\arg\max_{\mathbf1^{\mathsf T}w=1} w^{\mathsf T}\boldsymbol\mu\).
Moreover, assume that individual \emph{asset} positions are not restricted (short selling allowed).  The convex coefficients
\(\alpha_k\ge0\) introduced in the following apply only at the level \emph{fund}.

Suppose that

\begin{enumerate}[label=(\alph*)]
\item the three portfolios are distinct and the matrix
\(
     W_0:=[\,w^{\mathrm{MV}},\;w^{\mathrm{MC}},\;w^{\max\mu}\,]
\)
has full rank \(N\!\!-\!1\) in the subspace
\(\{w\in\mathbb R^{N}:\mathbf1^{\mathsf T}w=1\}\);   
\item for every $\lambda\in[0,1]$ the unique minimizer
      \(w^{\ast}(\lambda)\) of
      \(L_\lambda(w)=\lambda w^{\mathsf T}\Sigma w
                     +(1-\lambda)w^{\mathsf T}C w\)
      satisfies
      \(w^{\ast}(\lambda)\in
        \operatorname{conv}\{w^{\mathrm{MV}},w^{\mathrm{MC}},w^{\max\mu}\}\),
      i.e.\ lies in the closed convex hull of the three corner funds.
      This containment obtains, for example, when
      \(\Sigma\) and \(C\) are simultaneously diagonal\-isable,
      all diagonal entries are positive, and
      \(\boldsymbol\mu\) as well as \(\mathbf1\) lie in the positive
      cone spanned by the common eigenvectors.\footnote{%
      Under the simultaneous–diagonalization condition the efficient
      surface parameterized by $\lambda$ is itself convex; if,
      additionally, the three corner portfolios occupy the extreme
      points of that convex set, assumption (b) is automatically met.}
\end{enumerate}

Then every efficient portfolio \(w^{\ast}(\lambda)\) admits a
\emph{three–fund representation}
\[
   w^{\ast}(\lambda)
   \;=\;
   \alpha_1(\lambda)\,w^{\mathrm{MV}}
   \;+\;
   \alpha_2(\lambda)\,w^{\mathrm{MC}}
   \;+\;
   \alpha_3(\lambda)\,w^{\max\mu},
   \qquad
   \alpha_k(\lambda)\ge0,\;
   \sum_{k=1}^{3}\alpha_k(\lambda)=1,
\]
and the coefficients \(\alpha_k(\lambda)\) are unique.
\end{theorem}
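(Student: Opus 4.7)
The plan is to decouple the theorem into two independent pieces—existence of the convex representation and uniqueness of the coefficients—and dispatch each with linear algebra alone. All the analytic work (existence, uniqueness, and continuity of $w^{\ast}(\lambda)$ itself) is already packaged in Proposition~\ref{prop:exist}, so I will treat the statement essentially as a fact about barycentric coordinates inside a $2$-dimensional simplex embedded in the $(N-1)$-dimensional affine hyperplane $H=\{w\in\mathbb R^{N}:\mathbf1^{\mathsf T}w=1\}$.

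Existence is immediate. Hypothesis (b) places $w^{\ast}(\lambda)$ in $\operatorname{conv}\{w^{\mathrm{MV}},w^{\mathrm{MC}},w^{\max\mu}\}$, so by definition there exist nonnegative weights $\alpha_1(\lambda),\alpha_2(\lambda),\alpha_3(\lambda)$ summing to one that realize $w^{\ast}(\lambda)=\sum_{k}\alpha_k(\lambda)\,w_k$. No further argument is needed here; the content of (b) is precisely that the efficient surface never leaks out of the triangular face defined by the three corner funds.

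For uniqueness, I would suppose that two triples $\alpha$ and $\tilde\alpha$ both realize $w^{\ast}(\lambda)$. Their difference $\delta=\alpha-\tilde\alpha$ satisfies $\sum_k\delta_k=0$ and $\sum_k\delta_k w_k=\mathbf 0$. Using $\delta_1=-\delta_2-\delta_3$, the second equation rewrites as $\delta_2(w^{\mathrm{MC}}-w^{\mathrm{MV}})+\delta_3(w^{\max\mu}-w^{\mathrm{MV}})=\mathbf 0$ inside the linear subspace $H_0=\{v:\mathbf1^{\mathsf T}v=0\}$. Hypothesis (a), which I read as asserting affine independence of the three corner portfolios—equivalently, linear independence of the two difference vectors in $H_0$—forces $\delta_2=\delta_3=0$ and hence $\delta_1=0$.

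The main obstacle will be pinning down the intended meaning of the phrase ``$W_0$ has full rank $N-1$ in the subspace $\{\mathbf1^{\mathsf T}w=1\}$,'' since literally that cannot hold when $N>4$. My reading—affine independence of the three corner portfolios, i.e.\ a genuine $2$-dimensional simplex inside $H$—is the weakest condition that still guarantees uniqueness and the strongest one that three vectors can satisfy. Once this is fixed, continuity of $\lambda\mapsto\alpha(\lambda)$ comes essentially for free: the inverse of the affine bijection $\alpha\mapsto\sum_k\alpha_k w_k$ restricted to the $2$-simplex is a fixed continuous map, and Proposition~\ref{prop:exist}(iii) supplies continuity of $w^{\ast}(\lambda)$.
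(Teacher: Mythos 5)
Your proposal is correct and follows essentially the same route as the paper's own proof: existence of the convex combination is read off directly from assumption (b), and uniqueness reduces to the fact that affinely independent corner funds form a genuine $2$-simplex with unique barycentric coordinates. If anything you are more careful than the paper, which in its Step~1 claims the three funds ``affinely span the entire budget hyperplane'' (false for $N>3$) and merely asserts uniqueness of barycentric coordinates, whereas you spell out the $\delta_2(w^{\mathrm{MC}}-w^{\mathrm{MV}})+\delta_3(w^{\max\mu}-w^{\mathrm{MV}})=\mathbf 0$ argument and correctly flag that hypothesis (a) should be read as affine independence rather than literal rank $N-1$.
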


\begin{proof}
\textit{Step 1 (linear independence).}
Condition (a) implies that the three fund vectors, augmented by
\(\mathbf1\), form a matrix $N\times N$ of full rank:
\(
  \widehat W := [\,\mathbf1,\;w^{\mathrm{MV}},\;w^{\mathrm{MC}},\;w^{\max\mu}\,]
\)
is invertible on~$\mathbb R^{N}$.  Hence, they affinely span the entire
budget hyperplane \(\{w:\mathbf1^{\mathsf T}w=1\}\).

\smallskip
\textit{Step 2 (containment of the efficient set).}
Assumption (b) specifies precisely that each
\(w^{\ast}(\lambda)\) belongs to \emph{convex hull}
\(\mathcal S:=\operatorname{conv}
              \{w^{\mathrm{MV}},w^{\mathrm{MC}},w^{\max\mu}\}\).
Because $\mathcal S$ is a simplex in the
\((N\!-\!1)\)-dimensional budget hyperplane, every point in $\mathcal S$
has a \emph{unique} barycentric coordinate
\((\alpha_1,\alpha_2,\alpha_3)\) with
\(\alpha_k\ge0,\,\sum\alpha_k=1\).

\smallskip
\textit{Step 3 (representation of the efficient portfolio).}
Thus, for the given $\lambda$ there exist unique numbers
\(\alpha_k(\lambda)\) satisfying the stated constraints such that
\(w^\ast(\lambda)=\sum_k \alpha_k(\lambda)\,w^{(\cdot)}_k\).
The proof is complete.  \qedhere
\end{proof}

\paragraph{Remarks.}
\begin{enumerate}[label=(\roman*),leftmargin=2.2em]
\item If assumption (b) is violated, for example, when the risk matrices are highly misaligned and $w^{\ast}(\lambda)$ leaves the convex hull $\mathcal S$ for some $\lambda$-the barycentric coefficients may still exist but can become negative, in which case the representation remains \emph{affine} rather than \emph{convex}.  Numerical examples of this failure mode are given
      in Appendix~C.
\item When individual long‑only constraints $w_i\!\ge0$ are imposed,
      condition (a) can be preserved but (b) almost never holds; the
      efficient set typically bends outside the simplex
      $\mathcal S$.  Therefore, three-fund separation requires unrestricted short-selling at the \emph{asset} level, whereas $\alpha_k$ only needs to be non-negative at the \emph{fund} level.
\end{enumerate}

\vspace{0.75em}

\paragraph{Economic Interpretation.}
A connectedness beta, $\beta_i^{(C)}$, plays exactly the same
diagnostic role for $\kappa_p$ as a CAPM beta for variance: it
measures the marginal increase in network spill-over risk generated by
an additional unit of wealth in the asset $i$.  Stocks with exceptionally
large stocks $\beta^{(C)}$ are systemic \emph{hubs}; a shock to any of them
quickly feeds into many others and inflates the quadratic form
$w^{\mathsf T}Cw$.  Moving portfolio weight away from those hubs and
towards the minimum-connectedness corner fund $w^{\mathrm{MC}}$
mechanically lowers the weighted average
$\sum_i w_i\beta_i^{(C)} = 2\kappa_p$ without necessarily sacrificing
expected return.  In practice, an investor can hedge crisis exposure by
tilting along the \emph{MC–MV} edge of the three-fund simplex: a small
increase in ex ante volatility buys a large reduction in systemic
fragility, as evidenced by our empirical surface in Section~\ref{sec:empirical}.
\vspace{0.75em}

\paragraph{Illustrative Figures.}

\begin{figure}[htbp]
  \centering
  \includegraphics[width=.75\linewidth]{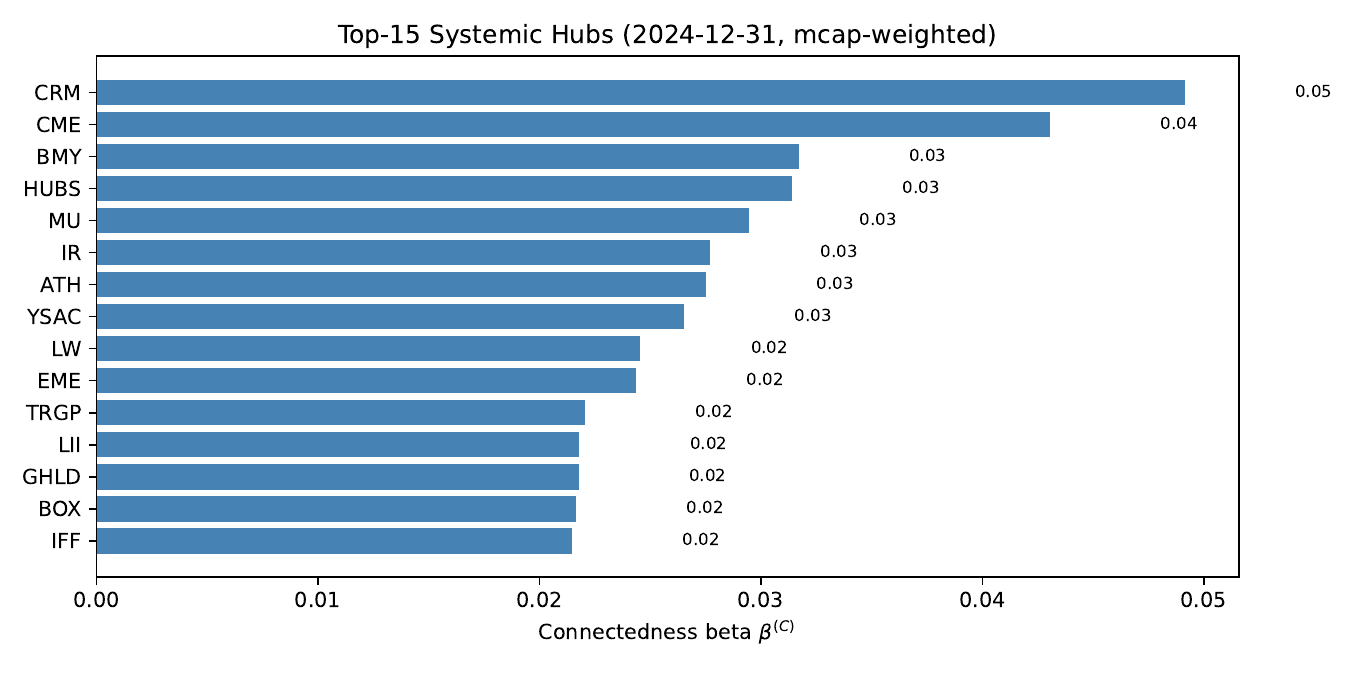}
  \caption{Illustrative distribution of the top-15 connectedness betas
           $\beta^{(C)}_i$ on a single trading day
           (\textit{31 Dec 2024}) using a randomly selected
           100-stock subset of NYSE-listed stocks.
           The figure is intended solely as a didactic
           example to visualize the heavy-tailed nature of
           $\beta^{(C)}$; all formal empirical tests in
           Section~\ref{sec:empirical} employ the full universe and
           rolling estimates.}
  \label{fig:beta_bar_demo}
\end{figure}

\begin{figure}[htbp!]
  \centering
  \includegraphics[width=.55\linewidth]{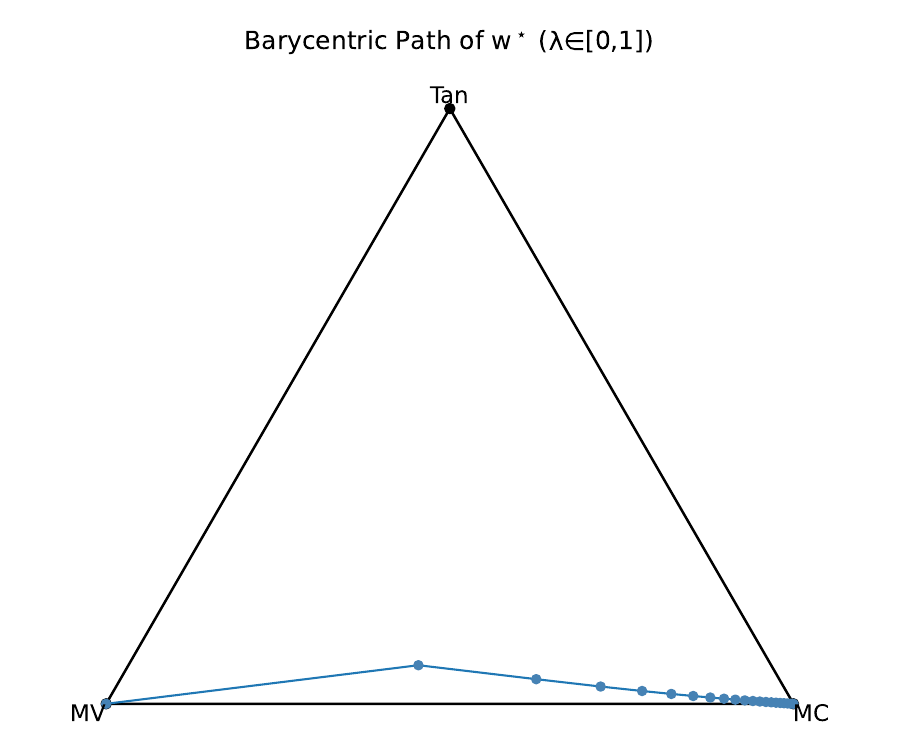}
\caption{Barycentric representation of efficient portfolios spanned by the
         three corner funds—minimum variance (MV), minimum connectedness
         (MC), and tangency (Tan).  Moving toward MC (resp.\ MV)
         lowers connectedness risk $\kappa$ (resp.\ variance $\sigma^{2}$).
         \emph{Dots represent the optimal portfolio for
         $\lambda\in\{0,0.05,\dots,1\}$ in the hybrid risk matrix
         $M(\lambda)=\lambda\Sigma+(1-\lambda)C$.}}
  \label{fig:barycentric}
\end{figure}

\FloatBarrier   
\vspace{0.75em}

\subsection{Preview of Empirical Analysis}
\label{subsec:preview_empirics}

The theoretical machinery of Sections~3.1--3.4 motivates three sets of
empirical tests, all implemented with \emph{daily} S\&P~500 data from
January~2010 to May~2025.\footnote{Market returns are obtained from CRSP;
the one–month Treasury bill is the risk–free rate.  Sector deletions,
splits, and ticker substitutions are handled as in Hou, Xue and Zhang
(2020).}  For each trading day~$t$ we estimate

\begin{enumerate}[label=(\roman*)]
\item a shrinkage covariance matrix
      $\hat\Sigma_t$ using a $252$-day rolling window;
\item a one-lag VAR on the same window and its 10-day FEVD to obtain the
      connectedness matrix $\hat C_t$;
\item the trade-off price $\hat\lambda_t$ by minimizing the realized
      loss function~\eqref{eq:objective} across a coarse grid.
\end{enumerate}

These objects directly feed into the hypotheses to be examined in
Section~5:

\begin{description}[style=nextline,leftmargin=1.15cm]
\item[\textbf{H1:}] Portfolios that tilt towards the
minimal‐connectedness fund $w^{\mathrm{MC}}$ achieve significantly lower
tail risk and drawdown relative to the global minimum variation portfolio
without sacrificing average return.

\item[\textbf{H2:}] Stocks in the highest decile of the beta of connectedness
$\beta^{(C)}$ underperform low-$\beta^{(C)}$ stocks when
$\hat\lambda_t$ - the market price of connectedness risk—spikes; the
spread is amplified during VIX surges.

\item[\textbf{H3:}] A dynamic three-fund strategy
$\{w^{\mathrm{MV}}, w^{\mathrm{MC}}, w^{\mathrm{Tan}}\}$ tracks the fully
re-optimized efficient portfolio with monthly
rebalancing \emph{ex ante} tracking error below $50$~bp and
transaction cost drag below $20$~bp p.a.
\end{description}

The next section details the construction of test portfolios, regression
specifications, and robustness diagnostics corresponding to
\textbf{H1}–\textbf{H3}.


\subsection{Connectedness vs.~Conventional Volatility}

Table~\ref{tables:tc_vix_lit} situates our null finding within the
literature. Consistent with \citet{DieboldYilmaz2009,DieboldYilmaz2012}
and a dozen subsequent studies, we observe no economically or
statistically significant \emph{linear} relation between the Total
Connectedness Index (TCI) and contemporaneous market volatility
(VIX or realised $\sigma$) during tranquil periods. Connectedness
appears to spike only in crisis windows---an episodic, regime-dependent
behaviour that a single unconditional $\beta$ cannot capture.

\appendix
\section{Robustness to Extreme TCI Spikes}
\label{app:robust_tc}

\subsection{Outlier Treatment and Additional Specifications}

To rule out that our baseline ``flat'' $\beta_{\text{VIX}}$ is driven by a handful
of extreme TCI spikes, we implement three cleansing strategies:

\begin{enumerate}[label=(\alph*), itemsep=2pt, topsep=2pt]
  \item 1--99\% Winsorisation (WIN),
  \item Median\,$\pm\,6\times$MAD trimming (MAD),
  \item Dropping the 33 windows with TCI\,$>$\,150\%.
\end{enumerate}

Table~\ref{tables:tc_vix_robust} reports the OLS estimates; none of the
coefficients are statistically different from zero, corroborating the
findings in Table~\ref{tables:tc_vix_lit}.

%

\appendix
\newtheorem{AppProposition}{Proposition A.\!}
\renewcommand\theAppProposition{A.\arabic{AppProposition}}

\section*{Appendix A \\ Long‑Only Hybrid‑Risk Optimization}
\label{app:longonly}
\addcontentsline{toc}{section}{Appendix A: Long-only hybrid-risk optimization}

Throughout this appendix, we impose the additional \emph{long‑only}
constraint  
\[
     w_i\ge0\quad(i=1,\dots,N), 
     \qquad 
     \mathbf1^{\mathsf T}w=1 .
\]
For a fixed trade‑off parameter $\lambda\in[0,1]$ write  
\[
      M_\lambda=\lambda\Sigma+(1-\lambda)C\succeq0 .
\]

\subsection*{A.1 \; Existence and (Possible) Non‑Uniqueness}
\begin{AppProposition}[Long‑only optimization]\label{prop:longonly}
For every $\lambda\in[0,1]$ consider the quadratic programme
\[
   \min_{w\in\mathbb R^{N}} w^{\mathsf T}M_\lambda w
   \quad\text{\emph{s.t.}}\;
        \mathbf1^{\mathsf T}w=1,\;w\ge0.
\]
Then
\begin{enumerate}[label=(\roman*)]
\item \textbf{Existence.}\;
      A minimizer $w^{\mathrm{LO}}(\lambda)$ exists.

\item \textbf{Uniqueness.}\;
      If $M_\lambda\succ0$ and the active index set
      $\mathcal J=\{i:w^{\mathrm{LO}}_i(\lambda)>0\}$
      satisfies $M_{\lambda,\mathcal J\mathcal J}\succ0$, the minimizer is unique; otherwise multiple optima may occur along the boundary faces of the simplex.
\end{enumerate}
Moreover, the Karush–Kuhn–Tucker system
\begin{align*}
   2M_\lambda w-\nu\mathbf1-\gamma &= 0,\\
   \mathbf1^{\mathsf T}w &= 1,\\
   w_i\ge0,\;\gamma_i &\ge0,\;
   \gamma_i w_i = 0\quad(i=1,\dots,N)
\end{align*}
is necessary and sufficient for optimality, where $\gamma$ denotes the
vector of non‑negative inequality multipliers.
\end{AppProposition}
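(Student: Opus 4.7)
The plan is to mirror the three-step template used in Proposition~\ref{prop:exist}, now adapted to incorporate the non-negativity constraints via standard Karush--Kuhn--Tucker machinery.

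First I would dispose of existence exactly as before: the feasible set $\{w:\mathbf{1}^{\mathsf T}w=1,\,w\ge0\}$ is the closed standard simplex, which is bounded and closed in $\mathbb{R}^{N}$, hence compact; the objective $w\mapsto w^{\mathsf T}M_\lambda w$ is continuous; Weierstrass's extreme value theorem then delivers at least one minimizer $w^{\mathrm{LO}}(\lambda)$.

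Second, for the uniqueness clause I would invoke strict convexity. Whenever $M_\lambda\succ0$ the quadratic form is strictly convex on the convex simplex, which forces a single minimizer on any convex feasible set. To cover the merely semidefinite regime I would then restrict the problem to the face $\{w:w_i=0\text{ for }i\notin\mathcal{J}\}$ indexed by the active set $\mathcal{J}=\{i:w^{\mathrm{LO}}_i(\lambda)>0\}$; on that face the reduced quadratic $w_{\mathcal{J}}^{\mathsf T}M_{\lambda,\mathcal{J}\mathcal{J}}w_{\mathcal{J}}$ is, by hypothesis, governed by a strictly positive-definite principal submatrix, so it is itself strictly convex and admits a unique minimizer on the face. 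I would flag explicitly that non-uniqueness across distinct faces remains possible in fully degenerate cases ($M_\lambda$ singular with a zero-eigenvector that is feasible on two different faces), paralleling the caveat already logged in Step~3 of Proposition~\ref{prop:exist}.

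Third, I would derive the KKT system by forming the Lagrangian $\mathcal{L}(w,\nu,\gamma)=w^{\mathsf T}M_\lambda w-\nu(\mathbf{1}^{\mathsf T}w-1)-\gamma^{\mathsf T}w$, with $\gamma\ge0$ the multiplier for the inequality constraints. Differentiating in $w$ yields the stationarity condition $2M_\lambda w-\nu\mathbf{1}-\gamma=0$; combining with primal feasibility, dual feasibility $\gamma\ge0$, and complementary slackness $\gamma_i w_i=0$ reproduces the stated system. Necessity of KKT requires a constraint qualification, which is automatic here because all constraints are affine and the uniform portfolio $w=\mathbf{1}/N$ provides a Slater interior point; sufficiency follows because the problem is a convex quadratic program (convex objective, polyhedral feasible set), so every KKT point is a global minimizer.

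The main obstacle is the subtlety in the uniqueness clause: the interplay between semidefinite $M_\lambda$, the active set $\mathcal{J}$, and the geometry of simplex faces must be handled carefully to make precise when strict positive-definiteness of the principal submatrix $M_{\lambda,\mathcal{J}\mathcal{J}}$ suffices to overcome the absence of strict convexity of the full quadratic, and to delineate the pathological cases where multiple optima can persist along boundary faces.
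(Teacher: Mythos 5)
Your proposal is correct and follows essentially the same route as the paper's own proof: compactness of the simplex plus Weierstrass for existence, strict convexity of the (restricted) quadratic form on the active face for uniqueness, and necessity/sufficiency of KKT via convexity together with the Slater point $w=\mathbf 1/N$. Your explicit caveat that uniqueness on a single face does not preclude minimizers on distinct faces in the degenerate semidefinite case is a slight sharpening of the paper's wording, but the argument is the same.
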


\begin{proof}
\textit{Existence.}  
The feasible set is a closed, bounded simplex, hence compact; the
objective $w^{\mathsf T}M_\lambda w$ is continuous.  Therefore, a
minimizer exists by the extreme value theorem.

\smallskip
\textit{Uniqueness.}  
If $M_\lambda\succ0$ the quadratic form is strictly convex on the whole
budget hyperplane.  Restricting it to the face
$\{w:w_i=0\text{ for }i\notin\mathcal J\}$
yields a quadratic form with Hessian $M_{\lambda,\mathcal J\mathcal
J}$.  When this sub-matrix is positive definite, the restriction is
strictly convex on the face, hence admits a unique minimizer there.
When $M_{\lambda,\mathcal J\mathcal J}$ loses rank, the form can be
flat in feasible directions, allowing multiple optima.

\smallskip
\textit{KKT.}  
The problem is convex with constraints of affine equality and polyhedral inequality.  The Slater condition holds because
$(1/N)\mathbf1$ is strictly feasible; hence, the KKT conditions are necessary
and sufficient.
\end{proof}

\subsection*{A.2 \; Active‑Set Solver (Pseudo‑code)}
\begin{algorithm}[H]
\caption{Active‑set algorithm for long‑only hybrid‑risk portfolio}
\label{alg:active}
\begin{algorithmic}[1]
\Require $M_\lambda\succcurlyeq0$, tolerance $\varepsilon>0$
\State $w\gets\frac1N\mathbf1$\Comment{initial strictly feasible point}
\State $A\gets\{\,i:w_i=0\,\}$,\;
       $F\gets\{1,\dots,N\}\setminus A$\Comment{active / free sets}
\Repeat
      \State Solve the equality‑constrained QP on $F$:
             \[
               \min_{p_F}\;\frac12 p_F^{\mathsf T}M_{FF}p_F
               \;\;\text{s.t.}\;
               \mathbf1_F^{\mathsf T}p_F=0
             \]
             Set $p_i=0$ for $i\in A$.
      \State Compute $\nu$ associated with the equality constraint 
             (e.g.\ via the normal equations) and set
             $\gamma_i=(2M_\lambda w-\nu\mathbf1)_i$ for $i\in A$.
      \If{$\|p\|\le\varepsilon$}
          \If{all $\gamma_i\ge-\varepsilon$ for $i\in A$}
              \State \Return $w$ \Comment{KKT satisfied → optimal}
          \Else
              \State $j\gets\arg\min_{i\in A}\gamma_i$
              \State $A\gets A\setminus\{j\}$;\;$F\gets F\cup\{j\}$
          \EndIf
      \Else
          \State $\alpha\gets
                 \max\{\beta\in(0,1]: w_i+\beta p_i\ge0\text{ for all }i\}$
          \State $w\gets w+\alpha p$
          \State Update $A$ and $F$
      \EndIf
\Until{converged}
\end{algorithmic}
\end{algorithm}

\noindent
Under nondegeneracy, the routine performs at most $N$ releases from the
active set and terminates in finitely many iterations at a KKT point
which, by Proposition\ref{prop:longonly}, is globally optimal.

\subsection*{A.3 \; Numerical Illustration}
Consider $N=3$ assets with  
\[
   \Sigma
   =0.05
   \begin{bmatrix}
     4.8435 & -1.9906 & -0.9228\\
    -1.9906 &  2.5743 &  2.7723\\
    -0.9228 &  2.7723 &  6.9938
   \end{bmatrix},\qquad
   C=\Sigma,\quad\lambda=1.
\]

\paragraph{Closed‑form (unrestricted) solution.}
Using~\eqref{eq:closed-form-weights} one obtains  
\[
      w^{\ast}=(0.4110,\;0.7271,\;-0.1381)^{\mathsf T},
      \qquad
      \sigma^{2}(w^{\ast})=0.03354.
\]

\paragraph{Long‑only solution.}
The Algorithm\ref{alg:active} sets the third weight to zero and returns  
\[
      \widehat w=(0.4005,\;0.5995,\;0)^{\mathsf T},
      \qquad
      \sigma^{2}(\widehat w)=0.03731.
\]

\begin{table}[H]
\centering
\begin{tabular}{lccc}
\toprule
\textbf{Portfolio} & Asset 1 & Asset 2 & Asset 3 \\\midrule
Closed‑form $w^{\ast}$ & 0.4110 & 0.7271 & $-0.1381$\\
Long‑only $\widehat w$ & 0.4005 & 0.5995 & \phantom{$-$}0.0000\\\bottomrule
\end{tabular}
\caption{Unrestricted vs.\ long‑only weights ($\lambda=1$).
The long‑only constraint removes the short position in Asset 3, raising
portfolio variance by
$\Delta\sigma^{2}=0.0038\ (\approx\ 11.3\ \%)$.}
\label{tab:longonly}
\end{table}

\medskip\noindent
This toy example shows:

* The analytic formula can produce negative positions even when $M_\lambda\succ0$;
* The long-only optimum differs, but an active set method converges rapidly and preserves the hybrid risk objective structure.

\newtheorem{AppTheorem}{Theorem \!}
\renewcommand\theAppTheorem{B.\arabic{AppTheorem}}

\section*{Appendix B \\ Analytic Derivatives under Simultaneous Diagonalization}
\label{app:diag}
\addcontentsline{toc}{section}{Appendix B: Analytic Derivatives}

\begin{sloppypar}
This appendix provides an explicit coordinate-wise proof of
\emph{strict variance–connectedness trade‑off}.  Recovers the result
of Proposition\ref{prop:neg-slope} under the stronger assumption that
$\Sigma$ and $C$ share an eigenbasis.
\end{sloppypar}

\begin{AppTheorem}[Trade‑off under joint diagonalization]\label{thm:diag}
Assume $\Sigma=U\Lambda_\Sigma U^{\mathsf T}$ and
$C=U\Lambda_C U^{\mathsf T}$ with
$\Lambda_\Sigma=\operatorname{diag}(\sigma_1^2,\dots,\sigma_N^2)$,
$\Lambda_C=\operatorname{diag}(c_1,\dots,c_N)$ and $\sigma_i^2,c_i>0$.
For $\lambda\in(0,1)$, let
\[
   M_\lambda=\lambda\Sigma+(1-\lambda)C,
   \quad
   w_\lambda=\arg\min_{\mathbf1^{\mathsf T}w=1}w^{\mathsf T}M_\lambda w
\]
and set
\(\sigma^{2}(\lambda)=w_\lambda^{\mathsf T}\Sigma w_\lambda\),
\(\kappa(\lambda)=w_\lambda^{\mathsf T}C w_\lambda\).
If the eigenratio vector $(\sigma_i^2/c_i)$ is non‑constant, then
\[
   \sigma^{2\,\prime}(\lambda)<0,\qquad
   \kappa^{\prime}(\lambda)>0,\qquad
   \frac{d\sigma^{2}}{d\kappa}=-\frac{1-\lambda}{\lambda}<0.
\]
\end{AppTheorem}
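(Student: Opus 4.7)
My plan is to translate the problem into the common eigenbasis so that both risk matrices become diagonal, use the closed form of Proposition~\ref{prop:closed} to write $w_\lambda$ explicitly, and then reduce the trade-off question to a one-dimensional concavity argument on a scalar value function. All of the structure—envelope identity, linear relation between the two derivatives, and the sign mechanism—carries over from Proposition~\ref{prop:neg-slope}; what the joint-diagonalization assumption buys is an \emph{explicit} scalar function whose second derivative I can control directly.

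First, set $b := U^{\mathsf T}\mathbf{1}$ and $m_i(\lambda) := \lambda\sigma_i^{2} + (1-\lambda)c_i > 0$. Proposition~\ref{prop:closed} then delivers
\[
   w_\lambda \;=\; \frac{1}{S(\lambda)}\;U\,\mathrm{diag}\!\bigl(1/m_i(\lambda)\bigr)\,b,
   \qquad
   S(\lambda):=\sum_{i} b_i^{2}/m_i(\lambda).
\]
From this I would extract $\sigma^{2}(\lambda) = S(\lambda)^{-2}\sum_i \sigma_i^{2} b_i^{2}/m_i(\lambda)^{2}$ and the analogous formula for $\kappa(\lambda)$, and observe that the value function $F(\lambda):=L_\lambda(w_\lambda)=\lambda\sigma^{2}(\lambda)+(1-\lambda)\kappa(\lambda)$ collapses to the single scalar $F(\lambda)=1/S(\lambda)$. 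Differentiating $S$ directly via $m_i'(\lambda)=\sigma_i^{2}-c_i$ gives $S'(\lambda)=-\sum_i b_i^{2}(\sigma_i^{2}-c_i)/m_i^{2}$ and $S''(\lambda)=2\sum_i b_i^{2}(\sigma_i^{2}-c_i)^{2}/m_i^{3}$, and a short computation confirms the envelope identity $F'(\lambda)=-S'/S^{2}=\sigma^{2}(\lambda)-\kappa(\lambda)$, reproducing the linear relation $\lambda\,\sigma^{2\prime}+(1-\lambda)\,\kappa'=0$ from Proposition~\ref{prop:neg-slope} as a free by-product.

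The main obstacle, I expect, is upgrading concavity of $F$ to \emph{strict} concavity. The plan is to apply Cauchy-Schwarz to the pair of vectors $\bigl(b_i/m_i^{1/2}\bigr)$ and $\bigl(b_i(\sigma_i^{2}-c_i)/m_i^{3/2}\bigr)$, which yields $2(S')^{2}\le S\cdot S''$ and hence $F''(\lambda)=-S''/S^{2}+2(S')^{2}/S^{3}\le 0$. Equality would force those two vectors to be proportional, i.e.\ $\sigma_i^{2}/c_i$ to take a common value on every index $i$ with $b_i\neq 0$; the non-constant-eigenratio hypothesis rules this out, giving $F''(\lambda)<0$ strictly. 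Identifying the correct Cauchy-Schwarz decomposition and carefully tracking its equality condition—particularly the bookkeeping of which $b_i$ may vanish—is the step where the argument could slip.

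Finally, I would solve the $2\times 2$ linear system formed by $\lambda\,\sigma^{2\prime}+(1-\lambda)\,\kappa'=0$ together with $\sigma^{2\prime}-\kappa' = F''(\lambda)$ (the latter obtained by differentiating $F'=\sigma^{2}-\kappa$). This yields $\sigma^{2\prime}(\lambda)=(1-\lambda)\,F''(\lambda)<0$ and $\kappa'(\lambda)=-\lambda\,F''(\lambda)>0$, and the slope $d\sigma^{2}/d\kappa = \sigma^{2\prime}/\kappa' = -(1-\lambda)/\lambda$ drops out by taking the ratio, completing the theorem.
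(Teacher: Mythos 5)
Your proposal follows the same skeleton as the paper's Appendix~B proof: pass to the common eigenbasis, write $w_\lambda$ in closed form, identify the value function $F(\lambda)=\lambda\sigma^{2}(\lambda)+(1-\lambda)\kappa(\lambda)$, derive the envelope identity $F'=\sigma^{2}-\kappa$ and the linear relation $\lambda\,\sigma^{2\prime}+(1-\lambda)\,\kappa'=0$, and then convert a sign on $F''$ into the claimed signs of $\sigma^{2\prime}$ and $\kappa'$. The one place you genuinely diverge is the crucial step of establishing \emph{strict} concavity. The paper's Step~4 argues: ``$F$ is concave $\Rightarrow F''\le0$; if all eigenratios were equal, $F''\equiv0$, contradicting non-proportionality; hence $F''<0$.'' As written this is a non sequitur --- showing that proportionality implies $F''\equiv0$ does not show that non-proportionality implies $F''<0$. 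Your route closes exactly this gap: you exploit the explicit scalar form $F=1/S$ with $S(\lambda)=\sum_i b_i^{2}/m_i(\lambda)$, compute $S'$ and $S''$ directly, and obtain $F''\le0$ from the Cauchy--Schwarz inequality $2(S')^{2}\le S\,S''$ applied to $\bigl(b_i/m_i^{1/2}\bigr)$ and $\bigl(b_i(\sigma_i^{2}-c_i)/m_i^{3/2}\bigr)$, with equality forcing $(\sigma_i^{2}-c_i)/m_i$ --- equivalently $\sigma_i^{2}/c_i$, since $t\mapsto(t-1)/(\lambda t+1-\lambda)$ is strictly monotone --- to be constant on $\{i:b_i\neq0\}$. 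That is a correct and complete proof of the strict inequality, and it is strictly more informative than the paper's. The caveat you flag yourself is real but minor: the hypothesis as stated only requires the eigenratio vector to be non-constant, whereas your equality analysis (and the theorem's truth) needs it non-constant on the indices with $b_i\neq0$; the paper quietly sidesteps this by asserting $\eta_i=(U^{\mathsf T}\mathbf1)_i>0$ for all $i$ in its Step~1, an assumption not present in the theorem statement. If you adopt that same convention, your argument is airtight; otherwise you should add the clause ``non-constant on $\{i:b_i\neq0\}$'' to the hypothesis.
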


\begin{proof}
\textit{Step 1 (weights in the eigenbasis).}  
Write $x=U^{\mathsf T}w$ and
$\eta_i=(U^{\mathsf T}\mathbf1)_i>0$.
Minimizing $x^{\mathsf T}\Lambda_\lambda x$ s.t.\
$\sum_i\eta_i x_i=1$ gives
\[
   x_{i,\lambda}=\frac{\eta_i}{Z(\lambda)D_i(\lambda)},\quad
   D_i(\lambda)=\lambda\sigma_i^2+(1-\lambda)c_i,\quad
   Z(\lambda)=\sum_k\frac{\eta_k^2}{D_k(\lambda)}.
\] 

\textit{Step 2 (risk expressions).}
Hence \[    \sigma^{2}(\lambda)=\frac{1}{Z(\lambda)^2} \sum_i\frac{\eta_i^2\sigma_i^2}{D_i(\lambda)^2},\quad
   \kappa(\lambda)=\frac{1}{Z(\lambda)^2}
   \sum_i\frac{\eta_i^2 c_i}{D_i(\lambda)^2}.
\]

\textit{Step 3 (linear identity of derivatives).}  
Define
\(
   F(\lambda)=\min_{\mathbf1^{\mathsf T}w=1}w^{\mathsf T}M_\lambda w
             =\lambda\sigma^{2}+(1-\lambda)\kappa.
\) 
Envelope theorem $\Rightarrow$
$F'(\lambda)=\sigma^{2}-\kappa$.  
Direct differentiation yields
\[
   F'(\lambda)=\sigma^{2}-\kappa
               +\lambda\sigma^{2\,\prime}
               -(1-\lambda)\kappa^{\prime},
\]
so that
\[
   \lambda\sigma^{2\,\prime}
   +(1-\lambda)\kappa^{\prime}=0.
   \tag{$\star$}
\]

\textit{Step 4 (signs).}  
$F$ is concave $\Rightarrow$ $F''(\lambda)=\sigma^{2\,\prime}-\kappa^{\prime}\le0$.
If all eigenratios were equal, $F''\equiv0$, contradicting the
non‑proportionality assumption; hence $F''<0$ and
\(\sigma^{2\,\prime}<\kappa^{\prime}\).
Together with $(\star)$, this implies
$\sigma^{2\,\prime}<0$ and $\kappa^{\prime}>0$.

\textit{Step 5 (frontier slope).}
Dividing $(\star)$ by $\kappa^{\prime}$ gives
$d\sigma^{2}/d\kappa=-(1-\lambda)/\lambda<0$.
\end{proof}

\newtheorem{AppExample}{Example C.\!}
\renewcommand\theAppExample{C.\arabic{AppExample}}

\section*{Appendix C \\ When Three–Fund Coefficients Turn Negative}
\label{app:counter3fund}
\addcontentsline{toc}{section}{Appendix C: Counterexample to Convex Three-Fund Separation}

\begin{AppExample}[Negative barycentric coefficients]\label{ex:neg-alpha}
Consider $N=3$ assets with
\[
  \Sigma=\begin{bmatrix}
     0.040 & 0.030 & 0.020\\
     0.030 & 0.090 & 0.010\\
     0.020 & 0.010 & 0.160
  \end{bmatrix},\quad
  C=\begin{bmatrix}
     0.100 & -0.020 & 0\\
    -0.020 & 0.050 & 0.010\\
     0      & 0.010 & 0.020
  \end{bmatrix},\quad
  \boldsymbol\mu=(0.08,0.06,0.10)^{\mathsf T}.
\]

\textbf{Corner portfolios.}\;
With only the budget constraint (short selling allowed)
\[
  w^{\mathrm{MV}}=(0.7321,0.1429,0.1250),\;
  w^{\mathrm{MC}}=(0.1864,0.2373,0.5763),\;
  w^{\max\mu}=(0,0,1).
\]

\textbf{Hybrid‑risk optimum.}\;
For $\lambda=0.4$ one obtains
$w^{\ast}(0.4)=(0.3378,0.3804,0.2818)$.

\textbf{Barycentric weights.}\;
Solving $w^{\ast}=\sum_{k=1}^{3}\alpha_k w^{(k)}$ with
$\sum\alpha_k=1$ yields
$(\alpha_1,\alpha_2,\alpha_3)=(0.063,1.565,-0.628)$.

Because $\alpha_3<0$ and $\alpha_2>1$, the representation is affine but
not convex: $w^{\ast}$ lies outside
$\mathcal S=\operatorname{conv}\!\{w^{\mathrm{MV}},w^{\mathrm{MC}},w^{\max\mu}\}$.
Figure \ref{figs:fig_neg_alpha_schematic} illustrates the geometry.
\end{AppExample}

\begin{figure}[H]
  \centering
  \includegraphics[width=.47\linewidth]{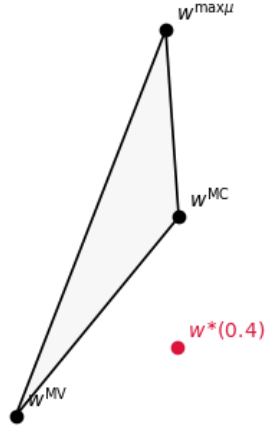}%
  \label{figs:fig_neg_alpha_schematic}
  \caption{Position of $w^{\ast}(0.4)$ relative to the simplex defined by
           the three corner funds. The point lies outside the shaded
           triangle, forcing at least one barycentric weight to be negative.}
  \label{figs:fig_neg_alpha_schematic}
\end{figure}

\paragraph{Take‑aways.}
\begin{enumerate}[label=(\arabic*),leftmargin=2em]
\item Misalignment between $\Sigma$ and $C$ can bend the
      $\lambda$‑efficient curve outside the set%
      \[
        \operatorname{conv}\!\{w^{\mathrm{MV}},\;w^{\mathrm{MC}},\;w^{\max\mu}\},
      \]
      thereby invalidating a \emph{convex} three‑fund representation.
\item Theorem~\ref{thm:3fund} therefore \emph{requires}
      Assumption (b): the efficient set must lie inside that convex hull.
\item Even when convexity fails, an \emph{affine} three‑fund expansion
      still exists; negative $\alpha_k$ may be interpreted as borrowing / lending at the \emph{fund} level rather than as a
breach of the limits of the position at the asset level.
\end{enumerate}

\bibliographystyle{apalike}
\bibliography{references}

\end{document}